  \providecommand\BibTeX{{%
    \normalfont B\kern-0.5em{\scshape i\kern-0.25em b}\kern-0.8em\TeX}}}
\def\sref#1{\S~\ref{#1}}
\def\eqref#1{equation~\ref{#1}}
\def\Eqref#1{Eq.(\ref{#1})}
\def\1{\bm{1}}
\def\rc{{\textnormal{c}}}
\def\ro{{\textnormal{o}}}
\def\rp{{\textnormal{p}}}
\def\rt{{\textnormal{t}}}
\def\rx{{\textnormal{x}}}
\def\rc{{{C}}}
\def\ro{{{O}}}
\def\rp{{{P}}}
\def\rt{{{T}}}
\def\rx{{{X}}}
\def\pr{{\text{Pr}}}
\def\vmu{{\bm{\mu}}}
\def\vsigma{{\bm{\sigma}}}
\def\vo{{\textbf{o}}}
\def\vr{{\textbf{r}}}
\def\vs{{\bm{s}}}
\def\vt{{\textbf{t}}}
\def\vv{{\textbf{v}}}
\def\vx{{\textbf{x}}}
\def\mA{{\bm{A}}}
\DeclareMathAlphabet{\mathsfit}{\encodingdefault}{\sfdefault}{m}{sl}
\SetMathAlphabet{\mathsfit}{bold}{\encodingdefault}{\sfdefault}{bx}{n}
\def\vrx{{\bm{\rx}}}
\def\vrt{{\bm{\rt}}}
\def\vro{{\bm{\ro}}}
\newtheorem{theorem}{Theorem}
\newcommand{\stitle}[1]{\vspace{2mm} \noindent {\bf #1}}
\newcommand{\method}[1]{\textsf{#1}}
\newcommand{\model}{\method{Vectorization}{}}
\begin{document}
% \fancyhead{}

%%
%% The "title" command has an optional parameter,
%% allowing the author to define a "short title" to be used in page headers.
\title{Scalar is Not Enough: Vectorization-based \\ Unbiased Learning to Rank}
\renewcommand{\shorttitle}{Scalar is Not Enough: Vectorization-based Unbiased Learning to Rank}

\author{Mouxiang Chen}
\affiliation{%
  \institution{Zhejiang University \&  Alibaba-Zhejiang University Joint Institute of Frontier Technologies}
  \country{}
}
\email{chenmx@zju.edu.cn}

\author{Chenghao Liu}
\authornote{
Corresponding authors.
}
\affiliation{
  \institution{Salesforce Research Asia}
  \country{Singapore}
}
\email{chenghao.liu@salesforce.com}

\author{Zemin Liu}
\affiliation{%
  \institution{Singapore Management University}
  \country{Singapore}
}
\email{zmliu@smu.edu.sg}

\author{Jianling Sun}
\authornotemark[1]
\affiliation{
  \institution{Zhejiang University \&  Alibaba-Zhejiang University Joint Institute of Frontier Technologies}
  \country{}
}
\email{sunjl@zju.edu.cn}

% \author{Mouxiang Chen}
% \affiliation{%
%   \institution{Zhejiang University}
%   \institution{Alibaba-Zhejiang University Joint Institute of Frontier Technologies}
%   \country{}
% }
% \email{chenmx@zju.edu.cn}

% \author{Chenghao Liu}
% \authornote{
% Corresponding authors.
% }
% \affiliation{
%   \institution{Salesforce Research Asia}
%   \country{}
% }
% \email{chenghao.liu@salesforce.com}

% \author{Zemin Liu}
% \affiliation{%
%   \institution{Singapore Management University}
%   \country{}
% }
% \email{zmliu@smu.edu.sg}

% \author{Jianling Sun}
% \authornotemark[1]
% \affiliation{
%   \institution{Zhejiang University}
%   \institution{Alibaba-Zhejiang University Joint Institute of Frontier Technologies}
%   \country{}
% }
% \email{sunjl@zju.edu.cn}

%%
%% By default, the full list of authors will be used in the page
%% headers. Often, this list is too long, and will overlap
%% other information printed in the page headers. This command allows
%% the author to define a more concise list
%% of authors' names for this purpose.
% \renewcommand{\shortauthors}{Trovato and Tobin, et al.}
\renewcommand{\shortauthors}{Mouxiang Chen et al.}

\begin{abstract}

Unbiased learning to rank (ULTR) aims to train an unbiased ranking model from biased user click logs. Most of the current ULTR methods are based on the examination hypothesis (EH), which assumes that the click probability can be factorized into two scalar functions, one related to ranking features and the other related to bias factors. Unfortunately, the interactions among features, bias factors and clicks are complicated in practice, and usually cannot be factorized in this independent way. Fitting click data with EH could lead to model misspecification and bring the approximation error.

In this paper, we propose a vector-based EH and formulate the click probability as a dot product of two vector functions. This solution is complete due to its universality in fitting arbitrary click functions. Based on it, we propose a novel model named \model\ to adaptively learn the relevance embeddings and sort documents by projecting embeddings onto a base vector. Extensive experiments show that our method significantly outperforms the state-of-the-art ULTR methods on complex real clicks as well as simple simulated clicks. \footnote{Codes are provided at \url{https://github.com/Keytoyze/Vectorization}}

\end{abstract}

%%
%% The code below is generated by the tool at http://dl.acm.org/ccs.cfm.
%% Please copy and paste the code instead of the example below.
%%
\begin{CCSXML}
<ccs2012>
<concept>
<concept_id>10002951.10003317.10003338.10003343</concept_id>
<concept_desc>Information systems~Learning to rank</concept_desc>
<concept_significance>500</concept_significance>
</concept>
</ccs2012>
\end{CCSXML}

\ccsdesc[500]{Information systems~Learning to rank}

%%
%% Keywords. The author(s) should pick words that accurately describe
%% the work being presented. Separate the keywords with commas.
\keywords{learning to rank, unbiased learning to rank, examination hypothesis}

%%
%% This command processes the author and affiliation and title
%% information and builds the first part of the formatted document.
\maketitle

\section{Introduction}

Learning to rank (LTR) with click data has been widely employed in modern information retrieval systems since this logged feedback reflects the utility of each document for each user \cite{joachims2005accurately}, which is relatively easy to obtain on a large scale. However, they inherently contain a lot of bias from user behavior \cite{joachims2007evaluating}. For example, users are more likely to observe documents at a higher position, known as position bias, which causes clicks to be biased with the position. Using unbiased learning to rank (ULTR) to remove these biases has attracted increasing research interest \cite{agarwal2019general,joachims2017unbiased}. The key idea is the \textbf{examination hypothesis} (EH): each document has a certain probability of being observed and is then clicked based on the relevance, where the observation depends on some bias factors (e.g., position), and the relevance depends on the features that encoding query and document. The EH can be written as:

\begin{align*}
    P(\text{click})=P(\text{observation}\mid \text{bias factors})\cdot P(\text{relevance}\mid \text{features}).
\end{align*}

\begin{figure}[h]
    \centering
    \includegraphics[width=0.3\textwidth]{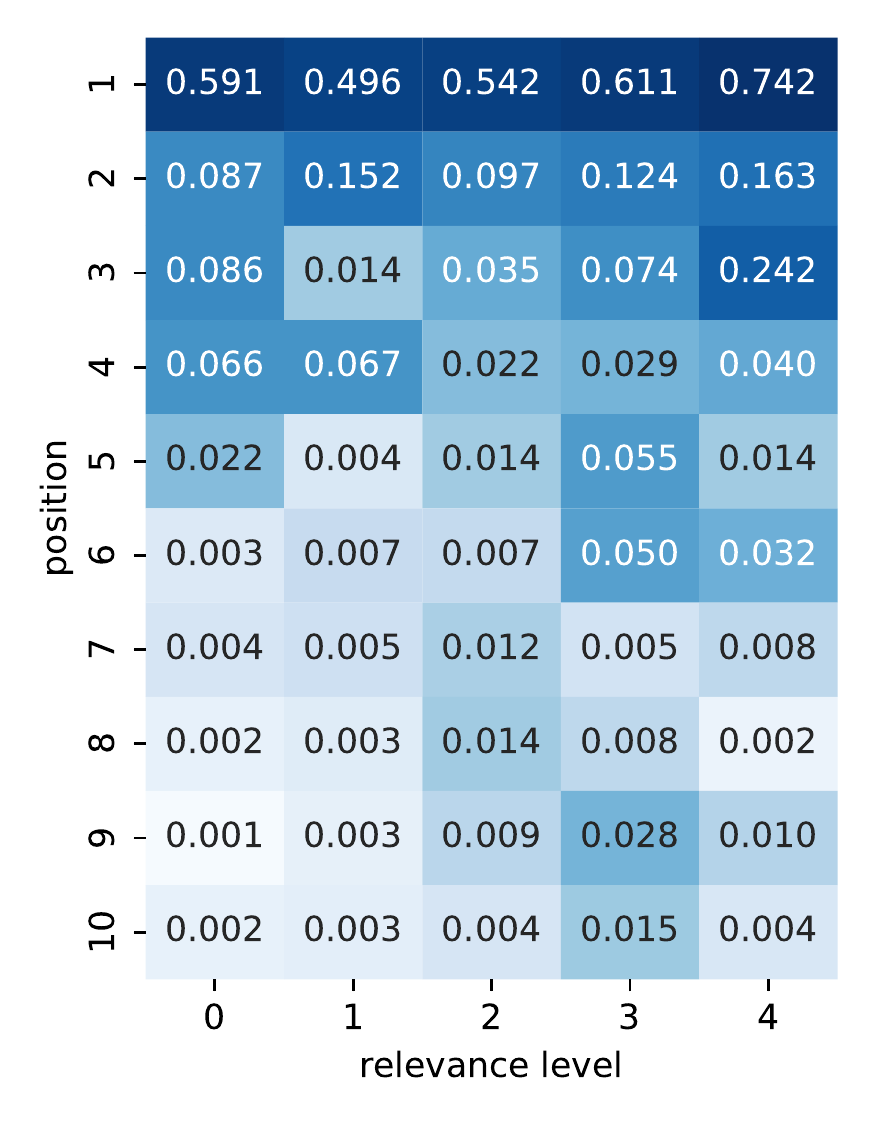}
    \caption{Real click rate matrix calculated on TianGong-ST. Darker colors indicate larger click rates.}
    \label{fig:matrix_click}
\end{figure}

Given this, ULTR methods try to model the observation probability using bias factors, and reweigh the click signals based on the reciprocal of observation probabilities, to recover an unbiased relevance probability for the ranking objective. 

The EH relies on an implicit assumption - the click probability can be factorized into two \textit{scalar} functions, one takes the bias factors as input only and the other takes features only. Unfortunately, this assumption isn't sufficient in practice, since the interaction between relevance and observation is rather complicated \cite{zheng2019constructing,agarwal2019addressing,vardasbi2020inverse}. 

% Previous research \cite{zheng2019constructing,agarwal2019addressing,vardasbi2020inverse} shows that the click probability in the real world cannot be written as a factorization with such two scalar functions.

To illustrate it more intuitively, we performed a simple statistical analysis on the TianGong-ST dataset\footnote{\url{http://www.thuir.cn/tiangong-st/}} \cite{chen2019tian}. This dataset contains both user click data (sampled from a real-world search engine) and relevance levels (annotated by humans). We grouped documents by their positions and relevance levels, and compute the click rates for each group as a click rate matrix. Figure \ref{fig:matrix_click} demonstrates the results. We assume the observation depends only on position \cite{joachims2017unbiased,ai2018unbiased,wang2018position,agarwal2019addressing}, and the relevance depends on the relevance level. 
% We can find that the interaction of clicks with relevance level and position is quite complicated in this matrix. 
Particularly, if the EH applies to it, then the matrix can be factorized into a $10\times 1$ vector (denotes observation probability for each position) and a $1\times 5$ vector (denotes relevance probability for each level)
, which infers that the matrix rank must be $1$. Obviously, this matrix does not satisfy this condition, since the singular values of this matrix are $(1.40, 0.15, 0.07, 0.06, 0.03)$. It shows that EH cannot describe real-world click rates accurately.

% However, the singular values of this matrix $(1.40, 0.15, 0.07, 0.06, 0.03)$ contain multiple non-negligible non-zero values, which contradicts the assumption. This proves that real-world click rates may not be able to be factorized in the form of EH.

We identify the root cause of this problem to be the fact that the EH is \textit{incomplete}: the click probability can be arbitrary functions related to bias factors and features due to the complicated interaction between observation and relevance, but the function family produced by the combination of such two scalar functions cannot cover all possible functions.
% the function family produced by the form like $f\cdot g$ cannot cover all possible functions $\mathcal{X}\times\mathcal{Y}\rightarrow \mathbb R$, where $f\in\mathcal{X}\rightarrow\mathbb R$ and $g\in\mathcal{Y}\rightarrow\mathbb R$
Using this form to fit the click data could lead to model misspecification and bring approximation error no matter how much data we collect. 
Recent efforts extended EH and explicitly described the generative process of clicks in their specific scenarios \cite{agarwal2019addressing,vardasbi2020inverse,ovaisi2020correcting}, which, however, require prior knowledge and still cannot obtain the best performance in common click scenarios due to their insufficiency to cover all possible click functions.

To address this issue, we extend the EH into a vector-based formulation: the click rates can be written as the dot-product of two \textit{vector} functions, one related to bias factors (named as observation embedding), and the other related to features (named as relevance embedding). Moreover, the universality of this factorization can be justified \cite{kaddour2021causal}: for any given click rate function, we can always find an appropriate dimension for these vectors such that the approximation error can be arbitrarily small. This suggests that our vector-based EH is complete and can catch  
complicated click patterns \textit{adaptively} by optimization. Compared to traditional relevance scalars, relevance embeddings have a more powerful capacity to encode how relevant a document is.
% , which allows it to fit any given click rate in a high-dimensional embedding space.

However, unlike relevance scalars, embedding vectors cannot be sorted, which challenges the usage of this vector-based EH. To sort the documents with their relevance embeddings in the inference stage, we propose to use a common \textit{base vector} and project each relevance embedding onto it to obtain relevance scalars, and sort the documents with these scalars. The challenge is how to find such a proper common base vector. We argue that we can find the most probable bias factors that ever appear in the training dataset together with all of these ranking features, and use the corresponding observation embedding as the base vector. This is because the dot-product of such a base vector and each relevance embedding is more closed to the real click rate when the features are assigned with the same bias factors. This product result can serve as a substitute for relevance. Given that some bias factors and ranking features may not overlap in the dataset, we further propose to use the most probable observation embedding that ever appears together with features as the base vector. Finally, we derive a closed-form of the base vector, which enables us to calculate it in a very efficient way. 

Moreover, to evaluate the performance of our model in practice, we propose a method to apply the click pattern in the real world in semi-synthetic experiments. Extensive experiments conducted on two widely-used datasets showed that our \model\ method significantly outperforms the state-of-the-art ULTR methods in both simple and complex click settings.

To the best of our knowledge, we are the first to study the limitation of scalar-based EH used by the current ULTR framework. The main contributions of this work are three-fold:
\begin{enumerate}[leftmargin=*]
    \item We propose a vector-based examination hypothesis (vector-based EH) that can capture complicated interactions between clicks, features, and bias factors through a dot-product between relevance embeddings and observation embeddings. This hypothesis is complete for real-world click data.
    \item We propose a method that can sort documents with relevance embeddings, in which each relevance embedding is projected onto a base vector. The base vector can be calculated efficiently.
    \item We provide a method to apply the real click pattern into semi-synthetic experiments.
\end{enumerate}
\section{Related Work}

\stitle{Debiasing Click Data.} 
Most of the current approaches to debiasing click data for ranking are based on the examination hypothesis. They can be divided into two groups. The first is to model user's behavior to infer relevance from biased click signals, known as \textit{click models} \cite{ubm,dcm,ccm,csm,ncm}. However, most click models focus on predicting clicks, and the relevance inference is an afterthought \cite{ai2018unbiased}. The second group tries to directly learn unbiased ranking models from biased clicks, known as \textit{unbiased learning to rank} (ULTR). Based on it, Joachims et al. \cite{joachims2017unbiased} proposed the inverse propensity scoring (IPS) method to reweigh the click signals based on the reciprocal of observation probabilities (called propensity scores) and provide an unbiased estimate of the ranking objective. The propensity scores are estimated by randomized experiments \cite{joachims2017unbiased,wang2016learning}, which hurts users' experience, unfortunately. To address it, Agarwal et al. \cite{agarwal2019estimating} and Fang et al. \cite{fang2019intervention} proposed to do intervention harvest by exploiting click logs with multiple ranking models. Nevertheless, they have a relatively narrow scope of application due to the strict assumption to construct interventional sets \cite{chen2021adapting}. Recently, some researchers proposed to jointly estimate relevance and bias \cite{wang2018position,ai2018unbiased,hu2019unbiased,jin2020deep}. Similar to them, our proposed method could jointly train the ranking model and observation model without intervention. 

On the other side, researchers developed models to extend the scope of bias factors. The bias factors contain position \cite{wang2018position,ai2018unbiased,hu2019unbiased,cai2020debiasing}, contextual information \cite{fang2019intervention,tian2020counterfactual}, clicks in the same query list \cite{chen2021adapting,vardasbi2020cascade}, presentation style \cite{zheng2019constructing,liu2015influence}, search intent \cite{sun2020eliminating} and result domain \cite{ieong2012domain}. In our work, we don't limit the exact meaning of bias factors, which makes our model more flexible and generic.

\stitle{Clicks beyond examination hypothesis.}
There is much work finding that the click functions of features and bias factors are complicated, and cannot be written in the form of scalar-based EH. For instance, in trust bias \cite{joachims2005accurately,agarwal2019addressing,vardasbi2020inverse}, users are more likely to click incorrectly on higher-ranked items, and the relevance scalar function requires an affine transformation about the position, to fit the clicks. As we will mention in this paper, trust bias can be written explicitly as a 2-dimensional vector-based EH. 

Beyond trust bias, there exist other click patterns that cannot be written in the form of EH. Williams et al. \cite{williams2016detecting} and Zheng et al. \cite{zheng2019constructing} argued that some documents with low click necessity will lower the click probability, while the click necessity is related to relevance and bias factors (like presentation style). Liu et al. \cite{liu2014skimming} found that users may examine results in several stages, and different bias factors and features take effect in the different stages. Even though the click function in these scenarios may not be written explicitly as a vector dot-product, the vector-based EH can approximate it thanks to its universality.

\stitle{Vector-based factorization.} Vector-based factorization is widely used in the field of recommendation systems, known as matrix factorization, where a user-item rating matrix is approximated by the product of two low-rank matrices (latent factor vectors) \cite{mnih2007probabilistic,liu2017collaborative,koren2022advances,wang2017interactive,he2016fast}. Besides, \cite{kaddour2021causal} proved the universality of product effect, which is a theoretical guarantee for our vector-based EH. However, this work didn't provide a solution to sort the vectors, which is crucial in the LTR task.

\section{Preliminaries}

In this paper, we use bold letters to denote vectors (e.g., $\vr$), and thin letters to denote scalars (e.g., $r$). Generally, the core of LTR is to learn a ranking model $f$ which assigns a relevance score to a document with its ranking feature. For a query, documents can be sorted in descending order by their scores. In the full information setting that we already know the true relevance for each document, the observational data related to a query $q\in \mathcal{Q}$ can be notated as $\mathcal{D}_q^{\text{full-info}}= \{(\vx_i, r_i)\}_{i=1}^n$, where $\vx_i \in \mathcal{X}$ denotes the ranking features encoding query, document and user, and $r_i\in \mathbb R$ denotes its true relevance score. The ranking target is to optimize $f$ by minimizing the empirical risk:
\begin{align*}
    \mathcal R=\frac{1}{|\mathcal{Q}|} \sum_{q\in\mathcal{Q}} \sum_{(\vx_i,r_i)\in \mathcal{D}_q^{\text{full-info}}} L(f(\vx_i), r_i),
\end{align*}
where $L$ denotes a loss function based on any specific IR metric of interest \cite{joachims2017unbiased}. The true relevance score $r_i$ denotes how relevant a document with the query related to the ranking features $\vx_i$, which is typically obtained by human annotation.

In practice, the relevance scores are often unknown and are costly to estimate through human labeling \cite{chapelle2011yahoo}. Instead, offline Unbiased Learning to Rank (ULTR) methods try to learn the ranking model from offline click logs, which are cheap and timely to obtain at scale. This is because click logs can be seen as implicit feedback which reflects users' preferences to some extent. Nevertheless, click logs are often biased. For example, higher-ranked documents are more likely to be observed and clicked (known as position bias). 

In this click setting, the observational data related to $q$ can be notated as $\mathcal{D}_q=\{(\vx_i,\vt_i,c_i)\}_{i=1}^n$, where $c_i\in\{0,1\}$ denotes the click signals of $\vx_i$, and $\vt_i\in\mathcal{T}$ denotes bias factors that cause clicks to be biased, such as document position \cite{joachims2017unbiased}, context information \cite{fang2019intervention}, other clicks around the document \cite{vardasbi2020cascade,chen2021adapting} or the presentation style \cite{liu2015influence}. In this work, we do not limit the exact meaning of $\vt_i$, which allows us to generalize our conclusion to most of the previous ULTR methods. For convenience, let $\mathcal D=\{(\vx,\vt)\mid (\vx,\vt,c)\in \mathcal{D}_q,q\in\mathcal{Q}\}$ denote all pairs of ranking features and bias factors that ever appear in the dataset.

We assume that the click rate of a document only depends on its ranking features and its bias factors. Denote $c(\vx,\vt)=\pr(c=1\mid \vx,\vt)$ as the click rate function, with $(\vx,\vt)\in\mathcal{D}$. In order to derive relevance from click data, most of the current ULTR methods \cite{joachims2017unbiased,ai2018unbiased,wang2018position,fang2019intervention,jin2020deep,chen2021adapting} are based on \textbf{examination hypothesis} (EH) to model user's click behavior. It assumes that the user clicks on a document if this document is observed and relevant. If we further assume that the relevance $r$ depends on the ranking features $\vx$ and observation $o$ depends on the bias factors, we have:

\begin{align}
    \label{eq:scalar_eh}
    c(\vx,\vt)=r(\vx)\cdot o(\vt),\quad \forall(\vx,\vt)\in\mathcal{D},
\end{align}
where $r(\vx)$ denotes the probability of relevant and $o(\vt)$ denotes the probability of being observed by user. Both $r: \mathcal{X}\rightarrow \mathbb R$ and $o:\mathcal{T}\rightarrow \mathbb R$ are scalar functions. By explicitly modeling the bias effect via observation probability, it is able to attain an unbiased estimate of the ranking objective.
\section{Vectorization-based ULTR}

\begin{figure}[h]
    \centering
    \includegraphics[width=0.4\textwidth]{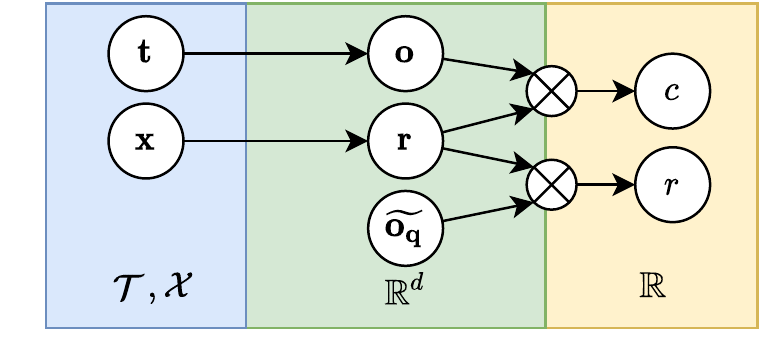}
    \caption{Graph representation of our method.}
    \label{fig:graph}
\end{figure}

In this section, we propose our \model\ to deal with complicated click patterns. Figure \ref{fig:graph} illustrates our framework. We first map bias factors and features into vectors and combine them onto clicks for training (\sref{sec:vec_eh}). For final ranking, we use a common base vector to project relevance embeddings onto scalar (\sref{sec:rank_with_emb}). Finally, we describe how to find such a base vector (\sref{sec:find_base}). In addition, we further discuss the relation with trust bias in \sref{sec:relation_trust}.

\subsection{Vector-based EH}

\label{sec:vec_eh}

Unfortunately, the interaction between clicks, bias factors and features is extremely complicated in real-world \cite{liu2014skimming,liu2015influence,williams2016detecting,buscher2010good}. An ideal factorization for the click in \Eqref{eq:scalar_eh} often does not exist in practice, since the function family produced by this form cannot cover all possible click rate functions. It leads to model misspecification and brings approximation error no matter how much data we collect.

To capture the complicated interaction between relevance and observation, we first extend the scalar-based EH to a vector-based formalization. That is, we assume a \textit{product effect}: the click function $c(\vx,\vt)$ can be written as a dot product of two functions, one over the ranking features $\vx$ and the other over the bias factors $\vt$.

\begin{align}
    \label{eq:vec_eh}
    c(\vx, \vt)=\vr(\vx)^\top\vo(\vt),
\end{align}
where $\vr: \mathcal{X}\rightarrow \mathbb R^d$ is the relevance function and $\vo: \mathcal{T}\rightarrow \mathbb R^d$ is the observation function. The outputs of them are referred to as \textbf{relevance embedding} and \textbf{observation embedding}. Note that, \Eqref{eq:scalar_eh} is a special case of \Eqref{eq:vec_eh} when $d=1$.

The universality of product effect can be formally justified: if we increase the dimensionality $d$ of $\vr$ and $\vo$, any arbitrary bounded continuous function in $c(\mathcal{X}\times \mathcal{T})$ can be approximated. A formal proof of the universality of product effect is provided in the Appendix \sref{sec:university_pe}. It enables us to decompose the biased click into an unbiased part (relevance) and a biased part (observation), no matter how complicated the interaction between relevance and observation is. 

\subsection{Rank with relevance embedding}

\label{sec:rank_with_emb}

\def\vbase{{\widetilde{\vo_q}}}

However, it's not possible to apply the vector-based EH immediately by reason that relevance embedding cannot be sorted according to their values. We need to find a method that can rank documents with their relevance embeddings. For a given query $q$ with $n$ ranking features $\vx_1,\vx_2,\cdots,\vx_n$, our goal is to rank these features with their relevance embeddings $\vr(\vx_1),\cdots,\vr(\vx_n)$. Obviously, it is not suitable to simply average the elements in vectors and sort all vectors based on the average values\footnote{This is because we do not impose a non-negative constraint on the embeddings. We can flip the signs of the relevance embedding and the observation embedding at the same time without changing their product, but the average of relevance embeddings is quite different after flipping. }. 
Consider the form of \Eqref{eq:vec_eh}, a natural solution is to find a common \textbf{base vector} $\vbase \in\mathbb R^d$ for a query $q$, and project each relevance embedding onto $\vbase$:

\begin{align}
    \label{eq:project}
    r(\vx_i)=\vr(\vx_i)^\top \vbase,\quad i\in[n].
\end{align}

Then we can sort with the scalar $r(\vx_i)$, like traditional LTR methods. In fact, for a set of relevance embeddings, there exists a base vector that can sort them in any given order if we allow $d$ to grow, which shows the universality of this projection method. This is because when $r(\vx_i)$ and $\vr(\vx_i)$ are fixed, \Eqref{eq:project} can be seen as linear equations in which $\vbase$ is a variable, $\vr(\vx_i)$ is a coefficient and $r(\vx_i)$ is a constant. Suppose that $\{\vr(\vx_i):i\in[n]\}$ are linearly independent. If $d\geq n$, then the linear equations must have a solution of $\vbase$. It suggests that if the dimension $d$ is large enough, we can always find a base vector such that $r(\cdot)$ can equal arbitrary values.

Now, the next problem is how to find such a base vector. If there exists a few labeled data, it can be done by solving the above linear equations. In this paper, we focus on a general case that the labeled data is unknown. 

\subsection{Find the base vector}

\label{sec:find_base}

In this section, we suppose the function $\vr(\cdot)$ and $\vo(\cdot)$ are given and fixed, and we aim to find the base vector \textit{unsupervisedly}. To start with, we assume that for any two documents, their click rate order equals to their relevance order if we fix their bias factors:

\begin{align}
    \label{eq:assumption_order}
    c(\vx_1,\vt)\geq c(\vx_2,\vt) \iff \vx_1\succeq_{r}\vx_2,\quad\forall (\vx_1,\vt),(\vx_2,\vt)\in\mathcal{D},
\end{align}
where $\vx_1\succeq_{r}\vx_2$ means that $\vx_1$ is more relevant than $\vx_2$. This is owing to that if two documents have the same bias factors, they are in the same environment, thus a more relevant document tends to receive more clicks.

We start with a toy example. For a query $q$ and the corresponding ranking features $\{\vx_1,\cdots,\vx_n\}$, suppose that there exist common bias factors $\vt$ such that they ever appear together with these features in $\mathcal{D}$. Here, we can just set: 
\begin{align}
    \label{eq:vbase_hard}
    \vbase=\vo(\vt), \quad\text{s.t. } (\vx_i,\vt)\in  \mathcal{D},\forall i\in[n].
\end{align}

This is because $\vr(\vx_i)^\top \vbase=\vr(\vx_i)^\top \vo(\vt)$ indicates the click rate of $\vx_i$ and $\vt$, which reflects the relevance because of \Eqref{eq:assumption_order}. Note that if there exists more than one $\vt$, it's hard to decide which one to use. Thus, we can use maximum likelihood estimation (MLE) to choose the bias factors $t^*$ that maximize the probability of appearing together with $\vx_1,\cdots,\vx_n$. Suppose $\mathcal{D}$ is generated from a joint distribution $P(\vrx,\vrt)$, where $\vrx$ are the ranking features and $\vrt$ are the bias factors. Then we set:

\begin{align}
    % \nonumber
    \label{eq:vbase_t}
    % \vt^*=\arg\max_{\vt} \prod_{i=1}^n P(\vrt=\vt\mid\vrx=\vx_i),\quad
    \vbase=\vo(\vt^*),\quad\text{where }\vt^*=\arg\max_{\vt} \prod_{i=1}^n P(\vrt=\vt\mid\vrx=\vx_i),
\end{align}
and $P(\vrt\mid\vrx)$ can be estimated from $\mathcal D$. Generally, the click rate estimation will be more accurate for the bias factors that have a larger probability of $P(\vrt\mid\vrx)$ \cite{zou2020counterfactual}. Thus we select the \textit{most possible} bias factors related to the ranking features as the base vector $\vbase$, since the combination $\vr(\cdot)^\top\vbase$ can be seen as an accurate click rate. 

However, it may be intractable in practice since common bias factors for all ranking features may \textit{not exist} in the training dataset. For example, we assume $\vt$ denotes the position of documents. Suppose that a ranking feature $\vx_1$ is always assigned to the first position, and another ranking feature $\vx_2$ is always assigned to the second position. No matter how we choose a position $\vt$, there must exist a ranking feature $\vx\in\{\vx_1,\vx_2\}$ such that $P(\vt\mid\vx)=0$, which challenges finding base vector by \Eqref{eq:vbase_t}.

\begin{figure*}[htbp]
    \centering
    \includegraphics[width=0.95\textwidth]{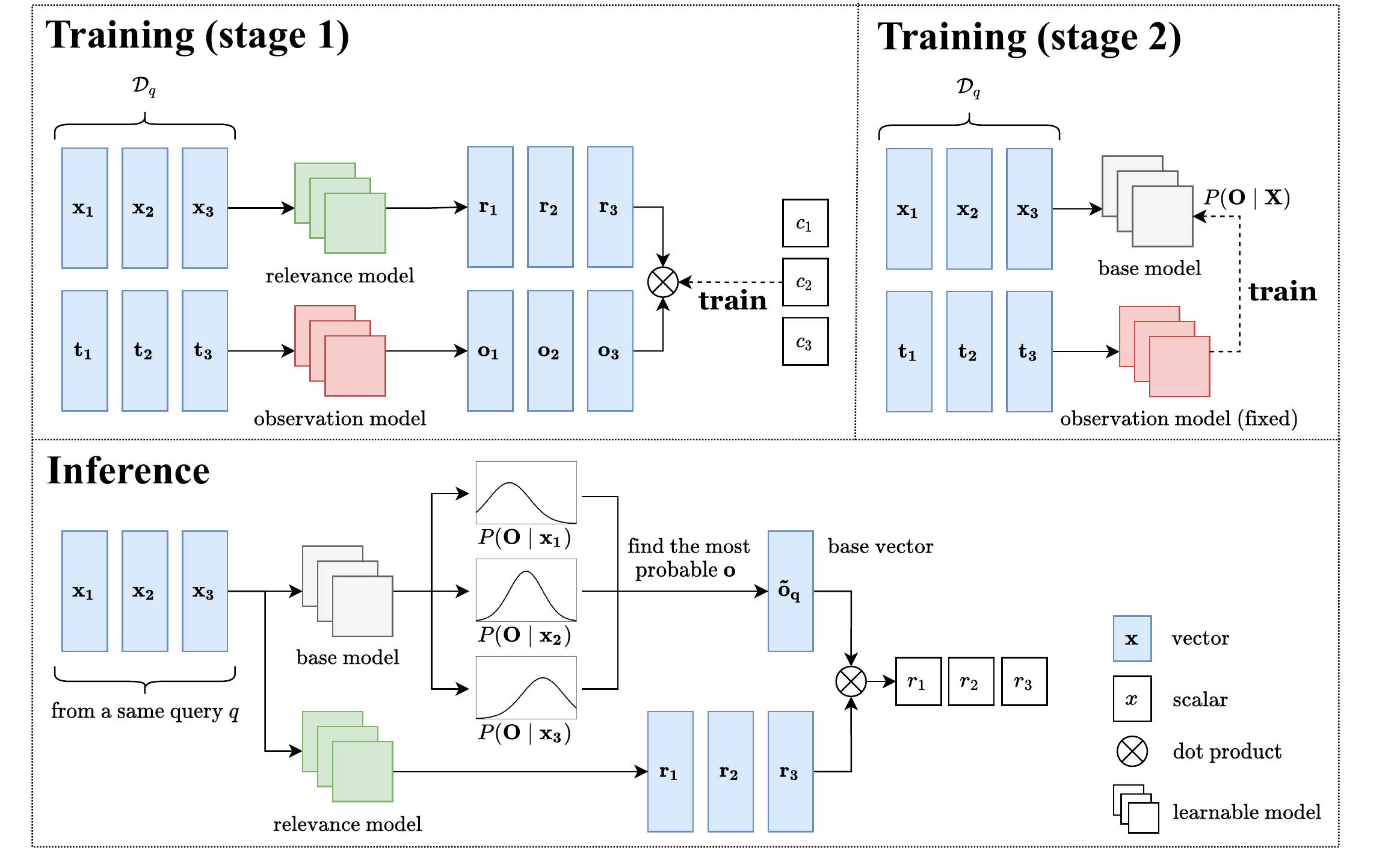}
    \caption{Framework of the proposed \model. In the first training stage, we jointly train a relevance model and an observation model with vector-based EH. In the second training stage, we train a base model to estimate the conditional observation embedding distribution. In the inference stage, we utilize the base model to infer the base vector based on input features and project each relevance embedding onto the base vector for ranking.}
    \label{fig:model}
\end{figure*}

We identify the root cause to be the fact that $\vrx$ and $\vrt$ may not \textit{overlap}: $P(\vrx,\vrt)$ may be zero for some $(\vrx,\vrt)\in\mathcal{X}\times\mathcal{T}$. To address this problem, we first transform $\mathcal{D}$ into $\mathcal{D}_{\vo}=\{(\vx,\vo(\vt)): (\vx,\vt)\in\mathcal{D}\}$. This time we use the observation embedding that maximizes the probability of appearing together with $\vx_1,\cdots,\vx_n$, rather than the raw bias factors:
\begin{align}
    \label{eq:vbase_o}
    \vbase=\arg\max_{\vo}\prod_{i=1}^nP(\vro=\vo\mid \vrx=\vx_i),
\end{align}
where $P(\vro\mid\vrx)$ can be estimated from $\mathcal{D}_{\vo}$. \Eqref{eq:vbase_o} is similar to but better than \Eqref{eq:vbase_t} in use, for that $\vro$ is more dense than the raw bias factors, which alleviates the overlap problem.

\def\vmean{{\vmu(\vx_i)}}
\def\vstd{{\vsigma(\vx_i)}}

Furthermore, we propose to model the $P(\vro\mid\vrx)$ as a multivariate Gaussian distribution (all of its components are independent), since that $P(\vro\mid\vrx)>0$ can always be established, which avoids the overlap problem and make the estimation more stable. More importantly, \Eqref{eq:vbase_o} will have a closed-form solution, which allows us to calculate $\vbase$ in a very efficient way\footnote{We admit that the Gaussian distribution may not be the best model, but we found that it was good enough in the experiment. One explanation is that the relevance embeddings are robust enough to tolerate the small perturbation of the base vector.}. Suppose that $P\left(\vro\mid\vrx=\vx_i\right)\sim \mathcal N\left(\vmean, \vstd^2\right)$, where $\vmean$ and $\vstd$ are the mean and the standard deviation of $\vro$ (given $\vrx=\vx_i$), we have:
\begin{align*}
    \vbase&=\arg\max_{\vo}\prod_{i=1}^nP
    \left(
        \vro=\vo\mid \vrx=\vx_i
    \right)\\
    &=\arg\max_{\vo} \prod_{i=1}^n \frac{1}{\vstd\sqrt{2\pi}}
    \exp\left(-\frac{(\vo-\vmean)^2}{2\vstd^2}\right)\\
    &=\arg\max_{\vo} \sum_{i=1}^n\left(-\frac{(\vo-\vmean)^2}{2\vstd^2}\right),
\end{align*}
where every operators are element-wise. The second equality uses the PDF of Gaussian distribution. Let:
\begin{align*}
    F(\vo)=\sum_{i=1}^n\left(-\frac{(\vo-\vmean)^2}{2\vstd^2}\right),
\end{align*}
by solving the equation $\nicefrac{\text{d}F}{\text{d}\vo}=0$ we reach the final form:

\begin{align}
    \label{eq:vbase_final}
    \vbase=\frac
    {\sum_{i=1}^n \frac{1}{\vstd^2} \vmean}
    {\sum_{i=1}^n \frac{1}{\vstd^2}}.
\end{align}

It indicates that for a given query, the base vector can be calculated through a weighted average over the most probable observation embeddings of all ranking features related to $q$. The larger the variance, the smaller the weight. The variance can be explained as the model's uncertainty. Thus, a major advantage of our method is that it can comprehensively consider the base vector according to the uncertainty, which helps to increase the robustness.

\subsection{Relation with trust bias}

\label{sec:relation_trust}

Most of current ULTR methods assume the scalar-based EH (\Eqref{eq:scalar_eh}), while an exception is the trust bias model \cite{agarwal2019addressing,vardasbi2020inverse}, in which the click rate $c(\vx,\vt)$ with regard to the ranking features $\vx$ and the bias factors $\vt$ (usually take the position) can be written as:
\begin{align*}
    c(\vx,\vt) = \theta(\vt)\left(
        \epsilon^+(\vt) r(\vx) + \epsilon^-(\vt) (1-r(\vx))
    \right),
\end{align*}
where $\epsilon^+$, $\epsilon^-$ and $\theta$ are some functions related to bias factors. This formula cannot be written in the form of scalar-based EH. Oppositely, it can be regarded as a special case of 2-dimensional vector-based EH (\Eqref{eq:vec_eh}), as long as we take:
\begin{align*}
    \vr(\vx)&=[r(\vx),1]^\top,\\
    \vo(\vt)&=\theta(\vt) [\epsilon^+(\vt)- \epsilon^-(\vt), \epsilon^-(\vt)]^\top,
\end{align*}
and always select $\vbase=[1,0]^\top$ as the base vector for final ranking. This proves the ability of our method to deal with trust bias. In the experiment, we find that in the trust bias click setting, when we take $d=2$, the performance of our model is similar to \method{Affine} \cite{vardasbi2020inverse}, which is designed to deal with trust bias specifically. This shows the method we used to find the base vector works.

\section{Model Implementation}

So far, we have shown that to effectively catch the complicated interaction among clicks, bias factors and features, we should learn a relevance embedding that can fully combine with the observation, and project the relevance embedding into the base vector to sort the documents. In this section, we introduce the implementation of the proposed \model\ method, in the training stage and the inference stage. Figure \ref{fig:model} illustrates the overall model architecture.

\subsection{Training stage} 

\stitle{Stage 1.}
We learn two models at first: relevance model $\vr$ and observation model $\vo$. For a query $q$ with the data $\mathcal D_q=\{(\vx_i,\vt_i,c_i)\}_{i=1}^n$, we first combine the relevance embedding and the observation embedding with dot-product, by \Eqref{eq:vec_eh}. Similar to \cite{ai2018unbiased}, we use a list-wise loss based on softmax-based cross entropy:

\begin{align}
    \label{eq:loss_click}
    L_{q}^{click}=-\sum_{i=1}^n c_{i} \log \frac{
            \exp\left(\vr(\vx_i;\theta_r)^\top \vo(\vt_i;\theta_o)\right)
        }{
            \sum_{j=1}^n  \exp\left(\vr(\vx_j;\theta_r)^\top \vo(\vt_j;\theta_r)\right)
        },
\end{align}
where $\theta_r$ and $\theta_o$ are the weights of the relevance model and the observation model respectively. The softmax-based cross entropy naturally converts the combinational output $\vr(\cdot)^\top\vo(\cdot)$ into click probability distributions. 

\stitle{Stage 2.}
After the convergence of the observation model, we need to fix it and learn a base model $\vv$ that estimates the distribution $P(\vro\mid\vrx)$ to find the base vector for the inference stage. As mentioned above, we fix a Gaussian likelihood to model the distribution, and the output of $\vv$ is composed of both predictive mean as well as predictive variance:
\begin{align*}
    \left[
        \vmu(\vx_i), \vsigma^2(\vx_i)
    \right] = \vv(\vx_i;\theta_v),
\end{align*}
where $\vmu(\vx_i)\in\mathbb R^d$ and $\vsigma^2(\vx_j)\in\mathbb R^d$ are the outputs of the base model $\vv$, and $\theta_v$ is the weight. We want to make the Gaussian distribution parameterized by $\vmu$ and $\vsigma$ close to the real distribution $P(\vro\mid\vrx)$. This can be done by minimizing the following regression loss, according to Kendall et al. \cite{kendall2017uncertainties}:
\begin{align}
    \label{eq:loss_base}
    L_q^{base}=\frac{1}{2}\sum_{i=1}^n \left(
        \frac{
            \left|\left| 
                \vmu(\vx_i)-\vo(\vt_i) 
            \right|\right|_2^2
        }{
            \vsigma^2(\vx_i)
        } + 
        \log \vsigma^2(\vx_i)
        \right) + \lambda || \theta_v ||_2^2,
\end{align}
where $\lambda$ is the hyper-parameter controlling the L2 regularization. In practice, we train the model to predict the log variance, $\vs(\vx_i):=\log \vsigma^2(\vx_i)$, because it is more numerically stable than regressing the variance, as the loss avoids a potential division by zero \cite{kendall2017uncertainties}.

\subsection{Inference stage}

In the inference stage, we discard the observation model $\vo$ and use $\vr$ and $\vv$ to estimate relevance scalars for ranking. For a query $q$ with the ranking features $\{\vx_1,\cdots,\vx_n\}$, we first calculate the base vector $\vbase$ by \Eqref{eq:vbase_final} with the base model, and then project each relevance embedding onto the base vector by \Eqref{eq:project}. 

We further present the algorithm for model
training and inference in Appendix \sref{sec:algorithm}.

\section{Experiments}

In this section, we describe our experimental setup and show the empirical results. We assume that the bias factors only contain the positions, so we modeled the observation model as a position-based model (PBM) in our experiments. One could easily extend it to other models that consider more bias factors. Correspondingly, all of our semi-synthetic setups used positions as the only bias factors.

\subsection{Experimental Setup}

\paragraph{Dataset}

% Table generated by Excel2LaTeX from sheet 'Sheet1'
\begin{table*}[htbp]
  \centering
  \caption{Comparison of different algorithms on two datasets in the real click setting. Numbers are shown with their standard deviation in 8 repeated runs (i.e., $\pm$ x).}
    \begin{tabular}{c|cccc|cccc}
    \hline
    \multicolumn{9}{c}{Real Click Setting} \bigstrut\\
    \hline
    \multirow{2}[4]{*}{Algorithms} & \multicolumn{4}{c|}{Yahoo!} & \multicolumn{4}{c}{Istella-S} \bigstrut\\
\cline{2-9}         & nDCG@1 & nDCG@3 & nDCG@5 & nDCG@10 & nDCG@1 & nDCG@3 & nDCG@5 & nDCG@10 \bigstrut\\
    \hline
    \hline
    \method{Labeled Data} & $0.680_{\pm.002}$ & $0.689_{\pm.001}$ & $0.711_{\pm.001}$ & $0.758_{\pm.001}$ & $0.655_{\pm.001}$ & $0.632_{\pm.001}$ & $0.660_{\pm.001}$ & $0.723_{\pm.000}$ \bigstrut[t]\\
    \method{Click Data} & $0.615_{\pm.010}$ & $0.635_{\pm.005}$ & $0.662_{\pm.004}$ & $0.718_{\pm.003}$ & $0.596_{\pm.002}$ & $0.576_{\pm.001}$ & $0.607_{\pm.001}$ & $0.676_{\pm.001}$ \bigstrut[b]\\
    \hline
    \method{DLA}  & $0.625_{\pm.013}$ & $0.643_{\pm.010}$ & $0.671_{\pm.010}$ & $0.725_{\pm.008}$ & $0.599_{\pm.016}$ & $0.580_{\pm.015}$ & $0.610_{\pm.012}$ & $0.679_{\pm.010}$ \bigstrut[t]\\
    \method{PairDebias} & $0.612_{\pm.005}$ & $0.633_{\pm.003}$ & $0.661_{\pm.003}$ & $0.717_{\pm.002}$ & $0.595_{\pm.002}$ & $0.576_{\pm.001}$ & $0.607_{\pm.001}$ & $0.676_{\pm.000}$ \\
    \method{RegressionEM} & $0.618_{\pm.010}$ & $0.628_{\pm.007}$ & $0.654_{\pm.006}$ & $0.709_{\pm.005}$ & $0.574_{\pm.010}$ & $0.544_{\pm.009}$ & $0.565_{\pm.012}$ & $0.622_{\pm.015}$ \\
    \method{Affine} & $0.654_{\pm.003}$ & $0.659_{\pm.003}$ & $0.682_{\pm.003}$ & $0.733_{\pm.003}$ & $0.637_{\pm.005}$ & $0.605_{\pm.004}$ & $0.628_{\pm.005}$ & $0.686_{\pm.006}$ \\
    \hline
    \model & \boldmath{}\textbf{$0.668_{\pm.002}$}\unboldmath{} & \boldmath{}\textbf{$0.674_{\pm.002}$}\unboldmath{} & \boldmath{}\textbf{$0.697_{\pm.002}$}\unboldmath{} & \boldmath{}\textbf{$0.747_{\pm.001}$}\unboldmath{} & \boldmath{}\textbf{$0.643_{\pm.004}$}\unboldmath{} & \boldmath{}\textbf{$0.613_{\pm.003}$}\unboldmath{} & \boldmath{}\textbf{$0.635_{\pm.002}$}\unboldmath{} & \boldmath{}\textbf{$0.694_{\pm.003}$}\unboldmath{} \\
    Increment (vs. \method{DLA}) & 6.88\% & 4.82\% & 3.87\% & 3.03\% & 7.35\% & 5.69\% & 4.10\% & 2.21\% \\
    Increment (vs. \method{Affine}) & 2.14\% & 2.28\% & 2.20\% & 1.91\% & 0.94\% & 1.32\% & 1.11\% & 1.17\% \bigstrut[b]\\
    \hline
    \end{tabular}%
  \label{tab:result_tiangong}%
\end{table*}%

\label{sec:dataset}

We didn't conduct our experiments on the TianGong-ST dataset, since the ranking features are highly limited and the performance of ranking models can be volatile, as reported in \cite{ai2021unbiased}. Instead, we follow the standard semi-synthetic setup in ULTR \cite{ai2018ultra,ai2018unbiased,vardasbi2020inverse} and conduct experiments on two widely used public benchmark datasets: Yahoo! LETOR\footnote{\url{https://webscope.sandbox.yahoo.com/}} \cite{chapelle2011yahoo} and Istella-S\footnote{\url{http://quickrank.isti.cnr.it/istella-dataset/}} \cite{lucchese2016post}. These two datasets have more ranking features than TianGong-ST. More importantly, this enabled us to simulate clicks under different bias settings, including complicated clicks close to the real scene and simple clicks simulated by click models. We provide further details for these datasets in Appendix \sref{sec:ds_detail}.

We followed the data split of training, validation and testing given by the datasets. To generate initial ranking lists for click simulation, we followed the standard process \cite{joachims2017unbiased,ai2018unbiased,chen2021adapting} and use 1\% of the training data with relevance labels to train a Ranking SVM model \cite{joachims2006training}. Based on these initial ranking lists generated by it, we sampled clicks under different click settings.

\paragraph{Click Simulation}

We considered the following two click settings. In both cases, only the top $n=10$ documents were considered
to be displayed. $y_{\max}$ denotes the maximum of the relevance level. For Yahoo! and Istella-S, $y_{\max}=5$.

\textbf{Real Click}. Define $\mA=[a_{ij}]\in\mathbb R^{y_{\max}\times n}$ as the real click matrix, in which each element $a_{ij}$ represents the click rate of a document that locates at the position $i$ and has a relevance level of $j$. The value of $\mA$ is calculated from the TianGong-ST dataset and given in Figure \ref{fig:matrix_click}. We sampled clicks on the two datasets according to this real click matrix $\mA$:
\begin{align}
    \pr(\rc=1\mid \vrx=\vx,\rp=p) = a_{py},
\end{align}
where $y\in[1,y_{\max}]$ is the relevance level. It allowed us to apply the real click rates to our experiments.

\textbf{Trust Bias}. We applied Agarwal et al. \cite{agarwal2019addressing}'s trust bias model to simulate clicks as our second click setting. The click probability can be written as:
\begin{align}
    \pr(\rc=1\mid \vrx=\vx,\rp=p) = \theta_p \left(
        \epsilon_p^+\gamma_{y}+\epsilon_p^-(1-\gamma_{y})
    \right),
\end{align}
where the relevance probability $\gamma_{y}$ is based on the relevance level $y$. We followed previous work \cite{chapelle2009expected,chen2021adapting,ai2018unbiased} and set:
\begin{align}
    \gamma_{y}=\frac{2^{y-1}-1}{2^{y_{\max}-1}-1},
\end{align}
and we used the position bias parameter $\theta_p$ estimated by Joachims et al. \cite{joachims2005accurately}. For the trust bias parameters $\epsilon_p^+$ and $\epsilon_p^-$, we adopted the following formula used by Vardasbi et al. \cite{vardasbi2020inverse}:
\begin{align}
    \epsilon_p^+=1-\frac{p+1}{100},\quad \epsilon_p^-=\frac{0.65}{p}.
\end{align}

Note that trust bias clicks follow 2-dimensional vector-based EH, while real clicks follow 5-dimensional vector-based EH since $\mA$'s rank is $5$.

\paragraph{Baselines}

The baselines consist of the state-of-the-art ULTR methods, including \method{RegressionEM} \cite{wang2018position}, \method{DLA} \cite{ai2018unbiased}, \method{PairDebias} \cite{hu2019unbiased}, \method{Affine} \cite{vardasbi2020inverse}, \method{Labeled Data} (uses human-annotated relevance labels to train the ranker directly, which provides an upper bound of ranker) and \method{Click Data} (uses the raw click data to train the ranker directly). Note that \method{DLA} and \method{RegressionEM} assume scalar-based EH. \method{Affine} is designed to deal with trust bias, which is a special case of 2-dimensional vector-based EH as we discuss in \sref{sec:relation_trust}. Training details can be found in Appendix~\sref{sec:training_detail}.

\subsection{Experimental Results}

\stitle{How does our method perform in the real click setting?} Table \ref{tab:result_tiangong} summarizes the results of the performance on the two datasets, where clicks are simulated according to the TianGong-ST real click matrix.  Particularly, we have the following findings:

\begin{figure*}[htbp]
    \centering
    \subfigure[real click (Yahoo!)]{
        \includegraphics[width=0.3\textwidth]{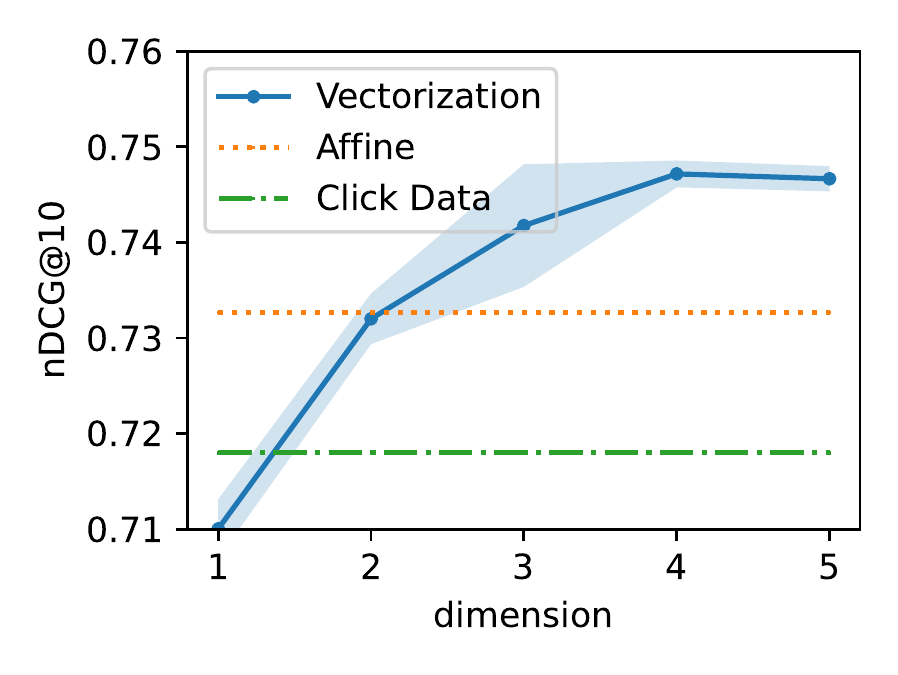}
        \label{fig:dimension_yahoo_tiangong}
    }
    \subfigure[real click (Istella-S)]{
        \includegraphics[width=0.3\textwidth]{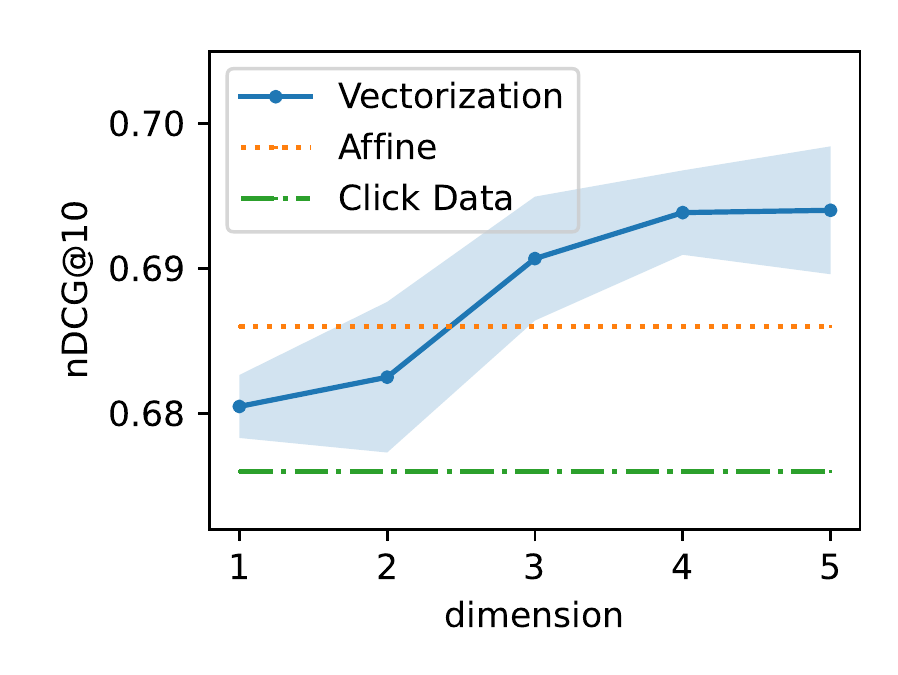}
        \label{fig:dimension_istella_tiangong}
    }
    \subfigure[trust bias (Yahoo!)]{
        \includegraphics[width=0.3\textwidth]{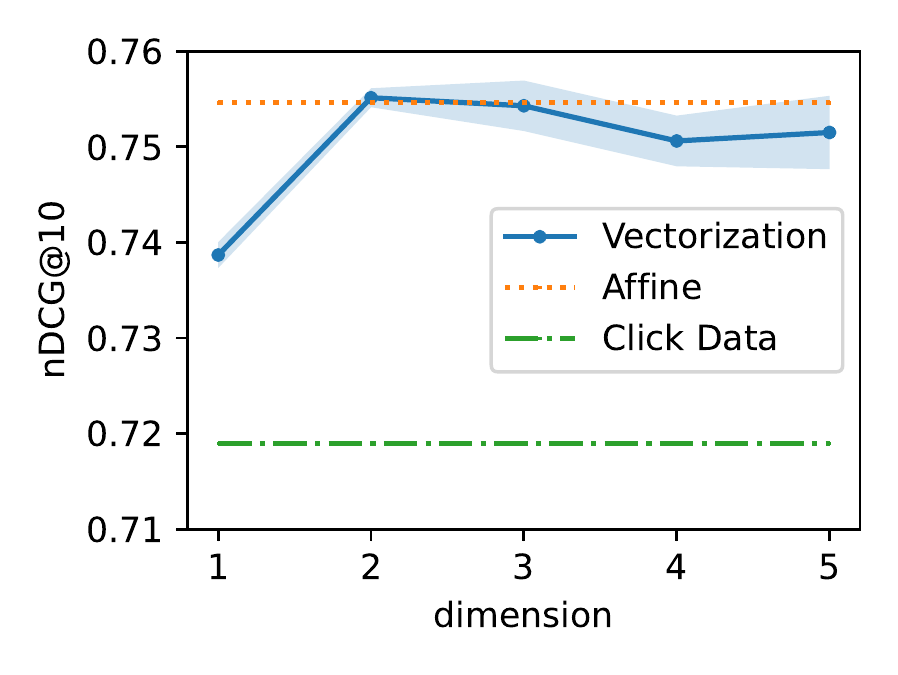}
        \label{fig:dimension_yahoo_trust}
    }
    \caption{Performance changes after adjusting the dimension in \model, where the figure titles are in the form of "click setting (dataset)". The variance is displayed with the shadow areas.}
    \label{fig:dimension}
\end{figure*}
\begin{figure*}[htbp]
    \centering
    \subfigure[real click (Yahoo!)]{
        \includegraphics[width=0.3\textwidth]{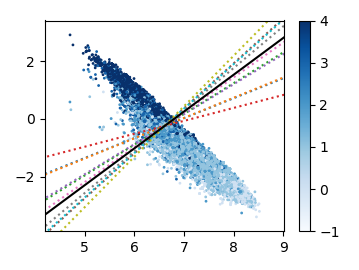}
        \label{fig:scatter_tiangong}
    }
    \subfigure[trust bias (Yahoo!)]{
        \includegraphics[width=0.3\textwidth]{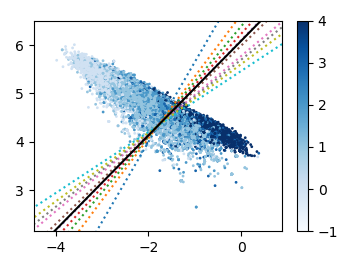}
        \label{fig:scatter_trust}
    }
    \subfigure[real click (Yahoo!)]{
        \includegraphics[width=0.3\textwidth]{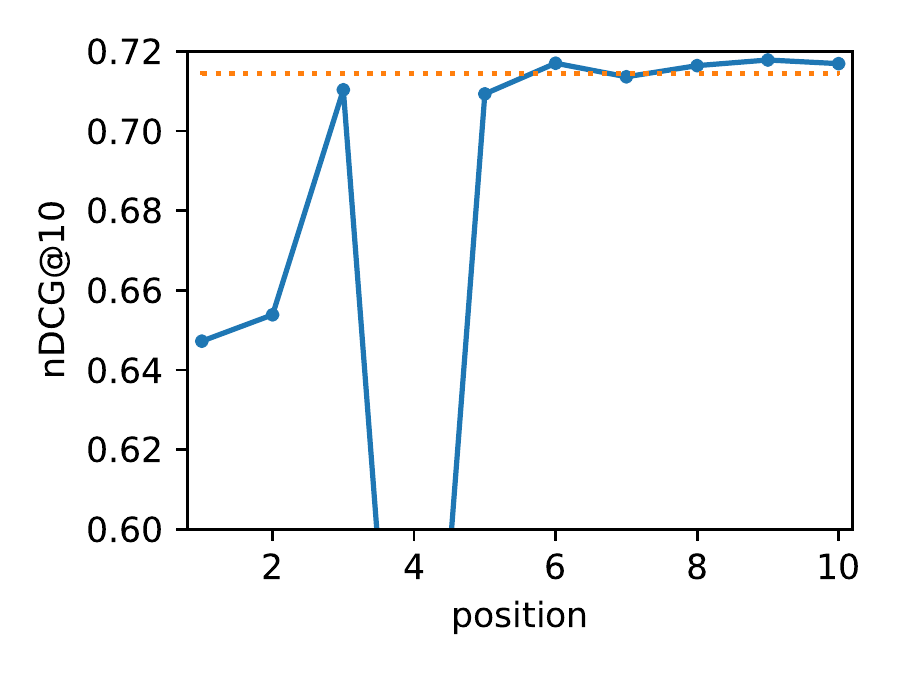}
        \label{fig:performance_position_yahoo_tiangong}
    }
    \caption{(a)(b) Visualization of the embedding space. Points represent relevance embeddings, and their colors represent the relevance level. Directions of observation embeddings for each position are marked as colorful dotted lines, and the average directions of base vectors are marked as black solid lines. (c) Performance when projecting relevance embeddings in Figure \ref{fig:scatter_tiangong} onto observation embeddings for each position (solid line), and when projecting the one onto base vectors (dotted line).}
    \label{fig:scatter}
\end{figure*}

\begin{enumerate}[leftmargin=*]
    \item Our method achieves better performance than all the state-of-the-art methods in terms of all measures. It demonstrates that the vector-based EH is more in line with real-world clicks.
    \item \method{Affine} works better than all the other baselines that follow the scalar-based examination hypothesis. Therefore, clicks in the real world are somewhat consistent with the trust bias since the trust bias model considers two-dimensional combinations.
    \item The standard deviation of our model is further less than that of \method{Affine}, which is less than that of \method{DLA} and \method{RegressionEM}. It shows that increasing the combination dimension can reduce the variance and obtain a more stable ranking model.
    % \item The improvement of our method vs. \method{DLA}/\method{Affine} is larger on nDCG@1/nDCG@3 than that on nDCG@5/nDCG@10. This suggests that our model is better at finding the most relevant documents than those less relevant documents.
\end{enumerate}

% Table generated by Excel2LaTeX from sheet 'Sheet1'
\begin{table}[ht]
  \centering
  \caption{Comparison of different algorithms on Yahoo! dataset in the trust bias click setting. Numbers are shown with their standard deviation in 8 repeated runs (i.e., $\pm$ x).}
    \begin{tabular}{c|ccc}
    \hline
    \multicolumn{4}{c}{Trust Bias Setting} \bigstrut\\
    \hline
    \multirow{2}[4]{*}{Algorithms} & \multicolumn{3}{c}{Yahoo!} \bigstrut\\
\cline{2-4}         & nDCG@1 & nDCG@5 & nDCG@10 \bigstrut\\
    \hline
    \hline
    \method{Labeled Data} & $0.680_{\pm.002}$ & $0.711_{\pm.001}$ & $0.758_{\pm.001}$ \bigstrut[t]\\
    \method{Click Data} & $0.615_{\pm.008}$ & $0.664_{\pm.003}$ & $0.719_{\pm.002}$ \bigstrut[b]\\
    \hline
    \method{DLA}  & $0.663_{\pm.004}$ & $0.702_{\pm.002}$ & $0.751_{\pm.001}$ \bigstrut[t]\\
    \method{PairDebias} & $0.624_{\pm.005}$ & $0.668_{\pm.001}$ & $0.723_{\pm.001}$ \\
    \method{RegressionEM} & $0.646_{\pm.008}$ & $0.686_{\pm.003}$ & $0.737_{\pm.002}$ \\
    \method{Affine} & $0.678_{\pm.003}$ & \boldmath{}\textbf{$0.707_{\pm.001}$}\unboldmath{} & \boldmath{}\textbf{$0.755_{\pm.001}$}\unboldmath{} \\
    \hline
    \model & \boldmath{}\textbf{$0.679_{\pm.003}$}\unboldmath{} & \boldmath{}\textbf{$0.707_{\pm.001}$}\unboldmath{} & \boldmath{}\textbf{$0.755_{\pm.001}$}\unboldmath{} \\
    Increment (vs. \method{DLA}) & 2.41\% & 0.71\% & 0.53\% \\
    Increment (vs. \method{Affine}) & 0.15\% & 0.00\% & 0.00\% \bigstrut[b]\\
    \hline
    \end{tabular}%
  \label{tab:result_trust}%
\end{table}%

\stitle{Can our method remove trust bias?} Table \ref{tab:result_trust} summarizes the results about the performance when clicks are generated by the trust bias model on Yahoo! dataset. We can observe that the performance of our model is fairly close to that of \method{Affine}, which shows the ability of our method for removing trust bias, and verifies its efficiency in finding the base vector. Both our method and \method{Affine} outperform the EH-based algorithms, which proves once again the limitation of the scalar-based EH. Besides, compared to Table \ref{tab:result_tiangong}, the performance of our method training on trust bias clicks is better than that of training on the real-world clicks. One explanation is that the trust bias model follows \Eqref{eq:assumption_order}: the click rate relative order will not change with the position, which benefits the finding of base vectors. However, the real clicks are complicated and do not follow this assumption. There is still room for debiasing real clicks.

\stitle{How many dimensions does the model need?} We further tune the dimension in our method to verify its impact on performance.
\begin{itemize}[leftmargin=*]
    \item Figure \ref{fig:dimension_yahoo_tiangong} and \ref{fig:dimension_istella_tiangong} demonstrates the results in the real click setting. We can see that the model performance rises as the dimension increases, which shows that a higher dimension can better capture complex click patterns. When the dimension is $2$, the performance of our method is similar to \method{Affine}, which is consistent with the fact that trust bias is a special case of 2-dimensional vector-based EH.
    \item Figure \ref{fig:dimension_yahoo_trust} demonstrates the results in the trust bias click setting. The best dimension is $2$, which shows that a dimension that matches real clicks can get the best performance. Besides,  increasing the dimension will not significantly degrade performance.
\end{itemize}

\stitle{What does our \model\ model learn?} For convenience of visualization, we set the dimension to $2$ and drawed all the relevance embeddings in Figure \ref{fig:scatter_tiangong} and \ref{fig:scatter_trust}. The direction of observation embeddings and the average direction of base vectors are drawn as a straight line. We discuss more property of this embedding space in Appendix \sref{sec:discuss_relevance_emb}. 

\stitle{What if we choose other observation embeddings as the base vector?}  Since there are only ten positions, we can choose observation embeddings for each position as the base vector to see the ranking performance. Figure \ref{fig:performance_position_yahoo_tiangong} shows the result based on Figure \ref{fig:scatter_tiangong}. We can see that the result projected 
onto the observation embedding changes drastically with the change of position, especially at the higher-rank position. As a comparison, the result of projecting them onto the base vector has always been well. It shows that our method of choosing the base vector is robust.

\stitle{How much extra time will be introduced in the inference phase?} Table \ref{table:infer_time} shows the inference time cost comparison of our \model\ with the conventional scalar-based ranker used by baselines. The inference is 21.4\% slower because of the more complicated model structure. Fortunately, since the base model and the ranker can run in parallel, we claim that this gap can be further narrowed through a more refined concurrency programming.

\begin{table}[h]
    \caption{Time spent sorting a batch (=256) of query lists. Each experiment was repeated for 3,000 times on a Tesla K80 GPU.}
    \label{table:infer_time}
    \begin{tabular}{rrr}
    \toprule
              \model  & \method{Scalar-based Ranker} & Increment \\ \midrule
     0.136 s & 0.112 s  & +21.4\% \\
    \bottomrule
    \end{tabular}
\end{table}
\section{Conclusion and future work}

\stitle{Conclusions.}
In this work, we take the first step to studying the limitation of the scalar-based examination hypothesis (EH). To better catch the complicated click pattern in the real world, we propose vector-based EH, in which observation and relevance interact in a higher-dimensional embedding space. Its universality can be justified, which shows that this hypothesis is complete. To rank documents with relevance embeddings in the inference stage, we propose to find the most probable observation embedding that ever appears with the given features in the training dataset as the base vector, and project relevance embeddings onto it. We obtain a close form of the base vector by modeling the observation embedding distribution as Gaussian distribution. We propose a new method to simulate more realistic clicks for testing. Extensive experiments showed that our \model\ significantly outperforms the state-of-the-art ULTR methods on complex real clicks as well as simple simulated clicks.

\stitle{Future work.}
In future work, it would be interesting to investigate better solutions to sort relevance embeddings. Besides, how to directly optimize a certain ranking metric (e.g., nDCG) with the vector-based EH is also an open question.

%%
%% The acknowledgments section is defined using the "acks" environment
%% (and NOT an unnumbered section). This ensures the proper
%% identification of the section in the article metadata, and the
%% consistent spelling of the heading.
\begin{acks}
Thanks to Yusu Hong for proofreading the article, and the reviewers for their
valuable comments and suggestions.
\end{acks}

%%
%% The next two lines define the bibliography style to be used, and
%% the bibliography file.
\bibliographystyle{ACM-Reference-Format}
\bibliography{reference}

%%%%%%%%%%%%%%%%%%%%%%%%%%%%%%%%%%%%%%%%%%%%%%%%%%%%%%%%%%%%%%%%%%%%%%%%%%%%%%%
%%%%%%%%%%%%%%%%%%%%%%%%%%%%%%%%%%%%%%%%%%%%%%%%%%%%%%%%%%%%%%%%%%%%%%%%%%%%%%%
% APPENDIX
%%%%%%%%%%%%%%%%%%%%%%%%%%%%%%%%%%%%%%%%%%%%%%%%%%%%%%%%%%%%%%%%%%%%%%%%%%%%%%%
%%%%%%%%%%%%%%%%%%%%%%%%%%%%%%%%%%%%%%%%%%%%%%%%%%%%%%%%%%%%%%%%%%%%%%%%%%%%%%%
\newpage
\appendix
\section*{Appendix}
\renewcommand\thesubsection{\Alph{subsection}}
\renewcommand\thesubsubsection{\thesubsection.\arabic{subsection}}

\subsection{Universality of product effect}

\label{sec:university_pe}

We prove that any bounded continuous function on $\mathcal{X}\times\mathcal{T}$ can be approximated with the dot-product of two vector functions on $\mathcal{X}$ and $\mathcal{T}$ respectively:

\begin{theorem}
Let $\mathcal{H}_{\mathcal{X}\times\mathcal{T}}$ be a Reproducing Kernel Hilbert Space
(RKHS) on the set $\mathcal{X}\times\mathcal{T}$ with universal kernel $k$. For any $\delta>0$, and any $f\in\mathcal{H}_{\mathcal{X}\times\mathcal{T}}$, there is a $d\in\mathbb{N}$ such that there exist two $d$-dimensional vector fields $g:\mathcal{X}\rightarrow\mathbb R^d$ and $h:\mathcal{T}\rightarrow\mathbb R^d$, where $||f-g^\top h||_{L_2(P_{\mathcal{X}\times\mathcal{T}})}\leq\delta$.
\end{theorem}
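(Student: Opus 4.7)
The plan is to reduce the claim to a classical density result in $C(\mathcal{X}\times\mathcal{T})$ and then apply Stone--Weierstrass, converting the sup-norm approximation to an $L_2$ approximation at the end. The key observation is that any finite sum $\sum_{i=1}^{d} g_i(x)\,h_i(t)$ can be rewritten as $g(x)^\top h(t)$ with $g=(g_1,\ldots,g_d)^\top$ and $h=(h_1,\ldots,h_d)^\top$, so it suffices to show that the set
$$
\mathcal{A} = \Bigl\{\,\textstyle\sum_{i=1}^{d} g_i(x)\,h_i(t)\; :\; d\in\mathbb{N},\; g_i\in C(\mathcal{X}),\; h_i\in C(\mathcal{T})\,\Bigr\}
$$
approximates $f$ arbitrarily well in $L_2(P_{\mathcal{X}\times\mathcal{T}})$.

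First, I would argue that $f\in C(\mathcal{X}\times\mathcal{T})$: a universal kernel $k$ is by definition continuous on a compact $\mathcal{X}\times\mathcal{T}$, and this forces every element of $\mathcal{H}_{\mathcal{X}\times\mathcal{T}}$ to be continuous (as a uniform limit of finite combinations of $k(\cdot,(x_0,t_0))$). Second, I would verify that $\mathcal{A}$ is a unital subalgebra of $C(\mathcal{X}\times\mathcal{T})$ that separates points: closure under addition is immediate; closure under multiplication follows from $(\sum_i g_i h_i)(\sum_j g_j' h_j') = \sum_{i,j} (g_i g_j')(h_i h_j')$; the constants are in $\mathcal{A}$ by taking $g_1\equiv h_1\equiv c$; and for distinct $(x,t)\neq(x',t')$ one may separate coordinates by a continuous function on $\mathcal{X}$ or $\mathcal{T}$ and pad with the constant function on the other factor.

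Third, Stone--Weierstrass then yields that $\mathcal{A}$ is uniformly dense in $C(\mathcal{X}\times\mathcal{T})$. Given $\delta>0$, pick $\tilde f=\sum_{i=1}^{d} g_i h_i \in\mathcal{A}$ with $\lVert f-\tilde f\rVert_{\infty}\le \delta$. Since $P_{\mathcal{X}\times\mathcal{T}}$ is a probability measure,
$$
\lVert f - g^\top h \rVert_{L_2(P_{\mathcal{X}\times\mathcal{T}})} \;\le\; \lVert f - \tilde f\rVert_{\infty} \;\le\; \delta,
$$
completing the proof with the desired dimension $d$ and vector fields $g,h$.

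The main obstacle is technical rather than conceptual: the Stone--Weierstrass step relies on compactness of $\mathcal{X}\times\mathcal{T}$, which is the standard setting for universal kernels but is not explicitly stated in the theorem. If the domains are not compact, I would first truncate to a compact set $K\subseteq \mathcal{X}\times\mathcal{T}$ carrying all but $\delta^2/(4\lVert f\rVert_\infty^2)$ mass under $P_{\mathcal{X}\times\mathcal{T}}$, apply Stone--Weierstrass on $K$, and extend $g_i,h_i$ continuously (e.g.\ via Tietze); the $L_2$ error then splits into a small tail plus a uniformly small part on $K$. A more constructive alternative would be to invoke Mercer's theorem on $k$, truncate its eigenfunction expansion at rank $d$, and read $g,h$ off the truncated basis; but this would require a tensor-product structure on $k$ that is not assumed in the statement, so the Stone--Weierstrass route is cleaner.
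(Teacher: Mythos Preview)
Your argument is correct. The paper itself does not give a proof at all: it simply points to Proposition~1 of \cite{kaddour2021causal} and stops. So you are not reproducing the paper's reasoning but supplying an independent, self-contained proof via Stone--Weierstrass on the algebraic tensor product $C(\mathcal{X})\otimes C(\mathcal{T})$, followed by the trivial $\lVert\cdot\rVert_{L_2(P)}\le\lVert\cdot\rVert_\infty$ estimate. This is a cleaner and more elementary route than what one typically finds behind such citations (which often go through Mercer-type expansions or random-feature arguments), and it has the advantage of making the role of compactness explicit---you correctly flag that universality of $k$ is standardly defined only over compact domains, and your truncation-plus-Tietze fallback for the non-compact case is the right patch. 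The only cosmetic point is that your proof actually delivers continuous $g$ and $h$, which is stronger than the bare ``vector fields'' promised by the statement; this is worth noting but costs nothing.
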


\begin{proof}
The proof can be found in Proposition 1 of \cite{kaddour2021causal}.
\end{proof}

\subsection{Algorithm}

\label{sec:algorithm}

\stitle{Training stage.} We illustrate the model training of \model\ in  Alg.~\ref{algo:training}.
In line 1, we initialize all the parameters. In lines 2-7, we jointly train the relevance model and the observation model through vector-based EH to make their dot product close to the clicks. In lines 8-12, we train the base model, to let the distribution estimation close to the observation embedding distribution. 

\begin{algorithm}
    \caption{\textsc{Model Training for} \model}
    \label{algo:training}
    \LinesNumbered
    \KwIn{dataset $\{\mathcal{D}_q\}_{q\in\mathcal{Q}}$, learning rate $\alpha$, $\lambda$}
    \KwOut{model parameters $\theta_r$, $\theta_o$, $\theta_v$}
    Initialize $\theta_r$, $\theta_o$ and $\theta_v$\; 
    \tcc{Stage 1.}
    \Repeat{$\theta_r$ and $\theta_o$ convergence}{
        sample a query $q\in\mathcal{Q}$, $\{(\vx_i,\vt_i,c_i)\}_{i=1}^n \leftarrow \mathcal{D}_q$\;
        compute $L_q^{click}$ with \Eqref{eq:loss_click}\;
        $\theta_r\leftarrow \theta_r - \alpha \cdot \nicefrac{\partial L_q^{click}}{\partial \theta_r}$\;
        $\theta_o\leftarrow \theta_o - \alpha \cdot \nicefrac{\partial L_q^{click}}{\partial \theta_o}$\;
    }
    \tcc{Stage 2.}
    \Repeat{$\theta_v$ convergence}{
        sample a query $q\in\mathcal{Q}$, $\{(\vx_i,\vt_i,c_i)\}_{i=1}^n \leftarrow \mathcal{D}_q$\;
        compute $L_q^{base}$ with \Eqref{eq:loss_base}\;
        $\theta_v\leftarrow \theta_v - \alpha \cdot \nicefrac{\partial L_q^{base}}{\partial \theta_v}$\;
    }
    \Return{$\theta_r$, $\theta_o$, $\theta_v$}
\end{algorithm}

\stitle{Inference stage.} The overview of the algorithm in the inference stage is summarized in Alg.~\ref{algo:inference}. In lines 1-2, we compute the observation embedding distributions for ranking features. In line 3, we compute the base vector. In lines 4-5, we project the relevance embeddings onto the base vector to obtain ranking scores.

\begin{algorithm}
    \caption{\textsc{Model Inference for} \model}
    \label{algo:inference}
    \LinesNumbered
    \KwIn{a set of ranking features $\{\vx_i\}_{i=1}^n$ related to a query, model parameters $\theta_r$ and $\theta_v$}
    \KwOut{ranking scores $r(\vx_1),r(\vx_2),\cdots,r(\vx_n)$}
    \For{$i=1$ \KwTo $n$}{
        $\vmu(\vx_i), \vsigma^2(\vx_i)\leftarrow \vv(\vx_i;\theta_v)$\;
    }
    compute $\vbase$ with \Eqref{eq:vbase_final}\;
    \For{$i=1$ \KwTo $n$}{
        $r(\vx_i)\leftarrow\vr(\vx_i; \theta_r)^\top \vbase$\;
    }
    \Return{$r(\vx_1),r(\vx_2),\cdots,r(\vx_n)$}
\end{algorithm}

\subsection{Further details of datasets}

\label{sec:ds_detail}

Table \ref{table_data_statistics} shows the characteristics of the two datasets, Yahoo! and Istella-S, in addition to the TianGong-ST which we used to estimate the real click rate function. 
\subsection{Training details}

\begin{table}[h]
    \caption{Dataset statistics}
    \label{table_data_statistics}
    \begin{tabular}{lrrr}
    \toprule
              & Yahoo!  & Istella-S & TianGong-ST \\ \midrule
    queries   & 28,719  & 32,968  & 3,449  \\
    documents & 700,153 & 3,406,167 & 333,813 \\
    features  & 700     & 220  & 33       \\
    relevance levels  & 5     & 5    & 5   \\
    avg. documents / query & 24.38 & 103.32 & 96.79 \\
    \bottomrule
    \end{tabular}
\end{table}

\label{sec:training_detail}

Similar to \cite{vardasbi2020inverse,ai2018unbiased}, we trained an MLP as our ranking model, where the implementation was the same as ULTRA framework \cite{ai2018ultra,ai2021unbiased}. To make fair comparisons, all the baselines and our model shared the same number of hidden units. The only difference in the ranking model between our model and baselines was the output dimensions. For our method, $\lambda$ was set to be $0.001$, the learning rate was tuned to be $0.05$, and the dimension was selected from $\{1,2,3,4,5\}$. We used two layers with sizes $\{256, 64\}$ and \textit{elu} activation as the implementation of the base model. We trained all these methods with a batch size of 256 and used AdaGrad to train the models. To ensure convergence, we train 15,000 epochs on the Yahoo! dataset, and 30,000 epochs on the Istella-S dataset. We used nDCG@$k$ $(k=1,3,5,10)$ as the performance metrics, and run each experiment for 8 times. We adopted the model with the best results based on nDCG@$10$ tested on the validation set and reported the average results testing on the test set.

\subsection{Discussion of relevance embedding}

\label{sec:discuss_relevance_emb}

From Figure \ref{fig:scatter_tiangong} and \ref{fig:scatter_trust}, we can see that the closer the true relevance is, the smaller the distance between them in the relevance embedding space. This is a good property since the distance relation can be kept approximately after linear transformation. That is to say, if we project them onto scalars, the close relevance embeddings should also have a close scalar distance, even if the base vector is inaccurate. It shows the robustness of the relevance embedding space.

Besides, one can observe that these relevance embeddings have a linear style. Why these points have this pattern is an open question. We tried to use PCA to find this direction and projected them onto it to see the performance. However, this method was not better than the method mentioned in this paper. This feature may be useful for finding better sorting methods in the future.

\end{document}